\long\def\@makecaption#1#2{\ifx\@captype\@IEEEtablestring%
\footnotesize\begin{center}{\normalfont\footnotesize #1}\\
{\normalfont\footnotesize\scshape #2}\end{center}%
\@IEEEtablecaptionsepspace
\else
\@IEEEfigurecaptionsepspace
\setbox\@tempboxa\hbox{\normalfont\footnotesize {#1.}~~ #2}%
\ifdim \wd\@tempboxa >\hsize%
\setbox\@tempboxa\hbox{\normalfont\footnotesize {#1.}~~ }%
\parbox[t]{\hsize}{\normalfont\footnotesize \noindent\unhbox\@tempboxa#2}%
\else
\hbox to\hsize{\normalfont\footnotesize\hfil\box\@tempboxa\hfil}\fi\fi}
\newtheorem{theorem}{\bf Theorem}[section]
\newtheorem{definition}{\bf Definition}[section]
\newtheorem{observation}{\bf Observation}[section]
\newtheoremstyle{case}{}{}{}{}{\textbf{}}{\textbf{:}}{ }{}
\theoremstyle{case}
\newtheorem{example}{\bf Example}[section]
\pgfplotsset{compat = 1.5}
\begin{document}

\title{ Fast Periodicity Estimation and Reconstruction of hidden components from noisy periodic signal}

\author{
    \IEEEauthorblockN{Bharadwaj Aryasomayajula\IEEEauthorrefmark{1}, Dibakar Sil\IEEEauthorrefmark{2}, Sarbani Palit\IEEEauthorrefmark{1}}\\
    \IEEEauthorblockA{\IEEEauthorrefmark{1}Indian Statistical Institute, Kolkata, India
    \\\{mtc1613, sarbanip\}@isical.ac.in}\\
    \IEEEauthorblockA{\IEEEauthorrefmark{2}National Institute of Technology, Durgapur, India
    \\ds.20150096@btech.nitdgp.ac.in}
}


\maketitle

\begin{abstract}
Periodicity estimation from an arbitrary length
noisy signal is computationally very costly. A recently developed Ramanujan Fat Dictionary is one of the ways to find the hidden components from an arbitrary length (non integral multiple of period) of the signal. This method suffers from high run time due to the lack of information about the period and effect of noise on the signal. We propose a new method that efficiently estimates the period of the signal and finding the hidden components thus becomes easy from it. 
Our method works well with significantly low SNR values and runs in $O(n)$ time complexity, $n$ being length of the signal. Comparision of run time analysis between our method  for period estimation of a given signal and SVD method at various SNR values has been made and the corresponding hidden components are there by extracted by projecting onto the factor-Ramanujan Subspaces.

\end{abstract}

\begin{IEEEkeywords}
Period Estimation, Ramanujan Periodic Transform, Hidden Components.
\end{IEEEkeywords}
\IEEEpeerreviewmaketitle
\section{Introduction}
\IEEEPARstart{M}{}ost of the real time signals acquired by various sources are generally corrupted by noise. Extraction of weaker periodic components in presence of noise is a well known challenge in the field of digital signal processing. A lot of available works have dealt with the component signal extraction from a composite signal while the components are sinusoidal. Although, in practical situations, in many a case, periodic signals that are non-sinusoidal are encountered. 
The period length, the structural patterns of periodic segments and the relative strengths of the respective segments are the three fundamental features of periodic signal.

In continuous domain, a signal $x(t)$ is said to be periodic with period $T$, if
\begin{align}
x (t+T) &= x (t), \forall t
\label{eq:contperiod}
\end{align}
where $T$ is the smallest positive interval that satisfies the Eq.\ref{eq:contperiod}. Similarly, in discrete domain, periodicity is defined as follows:
\begin{align}
x [n + P] = x [n] , \forall n \in Z
\label{eq:disperiod} 
\end{align}
where $P$ is the smallest positive integer that satisfies Eq.\ref{eq:disperiod}\\
Over the years, many works have been done to estimate the period of a given signal along with the hidden periodic components of the signal.

Two or more periodic signals, when added, give rise to a periodic signal whose period is the least common multiple (lcm) or a factor of lcm of the two components \cite{Ramanujan1}.
Applying Fourier Transform (FT) is one of the simplest way to detect the periods where the inverse of the fundamental component gives the period of the signal. However, for  composite signals (signals with more than one hidden period), some components may not be strong enough to be detected which indicates that  spectral analysis is not a reliable method for such signals.

Periodic decomposition \cite{1} provides an approach for determining the hidden periods from a composite signal. Setharas and Staley proposed an algorithm in which instead of choosing pre-determined basis, a data dependent non- orthogonal  basis is introduced \cite{2}.
Their proposed method projected the signal onto subspaces of different periodicities. Comparison of the projection energies was used for period estimation.      
P. P. Vaidyanathan \textit{et.al.} introduced the concept of the Ramanujan subspace ($S_q$) on the basis of Ramanujan sums \cite{Ramanujan1}. The Ramanujan sums $C_q(n)$ and its circular shifts form an integer basis of the subspace $S_q$. Periodicity estimation of signals is achieved by representing the signals in terms of linear combination of its components belonging to a number of the Ramanujan sub-spaces. 
In the companion paper \cite{Ramanujan2}, Vaidyanathan \textit{et.al.} showed that the Ramanujan$-$sum expansion method does not hold good for FIR signals. A linear combination of the Ramanujan sums, having number of terms equal to that of the signal length and a Ramanujan subspace based method were proposed to handle the FIR case. The Ramanujan Periodic Transform was also introduced, and a relatively easy calculation involving the scaled integer periodicity matrices yields the periodicities present in a signal from the non-zero energy projections.

However, the approach proposed in \cite{Farey} has a limitation that the input signal must have a length which is an integral multiple of the hidden periods that are present in the signal. For example, if two signals having periods 7 and 13 respectively are super imposed to form a new signal then  to apply the method proposed in \cite{Farey}, the length of the newly formed composite signal should be an integral multiple of $13 \times 7=91$ (i.e. 91,182... etc.,).
\begin{figure}[ht]
	\centering
	\includegraphics[scale = 0.3]{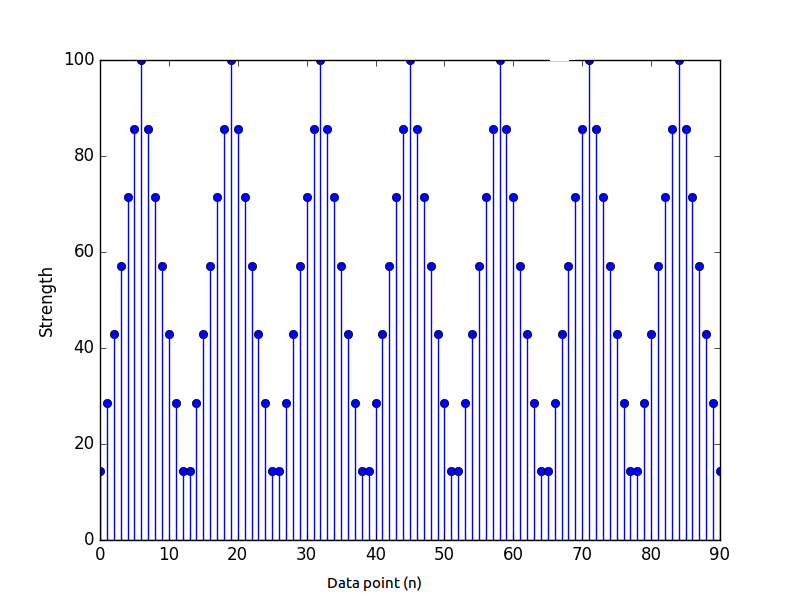}
	\caption{A triangular Signal with periodicity 13}
\end{figure}
\begin{figure}[ht]
	\centering
	\includegraphics[scale = 0.3]{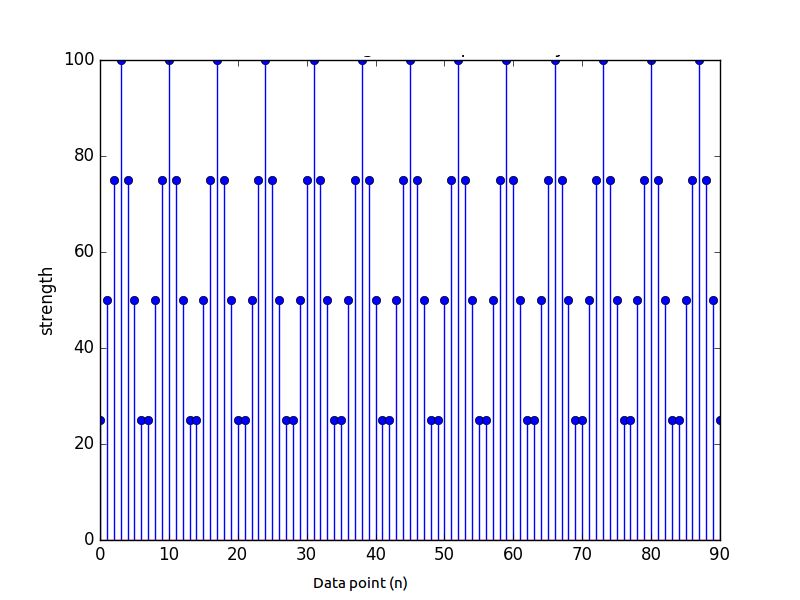}
	\caption{A triangular Signal with periodicity 7}
\end{figure}
\begin{figure}[ht]
	\centering
	\includegraphics[scale = 0.3]{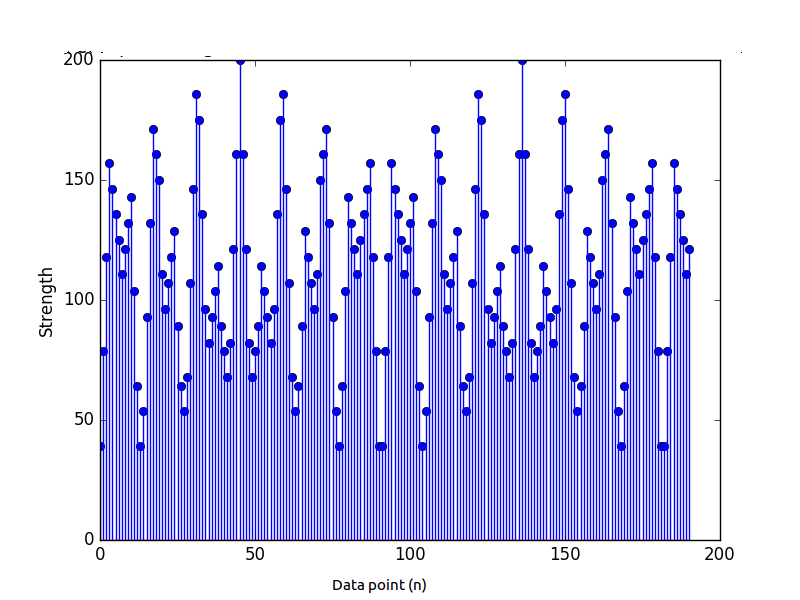}
	\caption{Signal with periodicity 91 when added Fig.1 and Fig.2}
\end{figure}
\begin{figure}[ht]
	\centering
	\includegraphics[scale = 0.3]{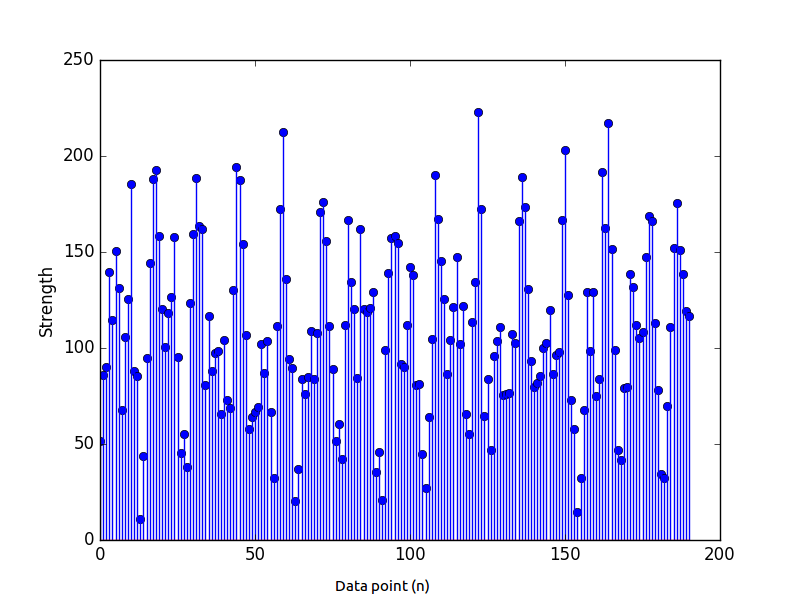}
	\caption{Noisy signal having SNR 35 dB}
\end{figure}

To overcome this problem, in \cite{Farey} dictionary or fat matrix based approach has been implemented for periodicity estimation of a signal. An integer valued generalization of the complex valued Farey dictionary has been also proposed. RPT based dictionaries proved to yield better result in terms of noise immunity and complexity.
The Farey Dictionary method in \cite{Farey} was further generalized in \cite{Nested} where a number of other dictionaries have been constructed which provide faster calculations. Inspired from the RPT matrices a family of square matrices have been introduced that results in faster calculation. Some real valued alternatives of the DFT matrix is used to form the Nested Periodic Matrices fusing the concepts of many transforms that are used in periodicity estimation. Incorporation of the $L_2$ norm based methods further simplifies the calculation. In \cite{Rfilter Prop} Ramanujan filter banks were introduced based on the Ramanujan-sums. In order to estimate periodicities with the help of filters, a filter must be designed which indicates presence of particular period in its input signal by a significant change in its output. Non adaptive comb filters based on the Ramanujan-sums serves the purpose. The Ramanujan filter banks also proved to be successful in representing the time varying periodicity of a signal.
The advantages of period detection based on RFB over Short-Time Fourier Transform (STFT) is described in \cite{Rfilterbank}. Further researches suggested that the Nested Periodic Dictionaries proposed in \cite{Rfilter Prop} are minimal in size \cite{Mindictionary}. For proper integer period estimation, the highest lower boundary of the data length that is subjected to an algorithm was aimed to obtain in \cite{Mindatalen}. The identifiability of the hidden periods has also been evaluated in terms of the input data length. After evolving the bound in data length that is applicable to any of the period estimation techniques, the results were further extended to the Farey dictionary based single and hidden period detection method. It was also shown that depending on the number of expected harmonics in a signal the required data length changes for signals with non integer periods.

\subsection{Our Contribution}
When the length of the signal ($N$) is not an integral multiple of the period of the signal then fat matrix technique presented in \cite{Farey} tries to figure out the period of the signal by projecting the signal on to all of the ramanujan subspaces $S_q$'s, $q < N$. This is rather time consuming. In such cases, period of the signal has to be estimated. This estimation of period of the signal when the length of the signal is not an integral multiple of the period is dealt here. We proposed a method to estimate the period of the signal given the above restriction and compared it with the Singular Value Decomposition (SVD) method described in \cite{SVD}. The proposed method has been made fast by employing randomization and montecarlo techniques.

The reconstruction of the periodic components from the composite signal corrupted by noise is also a great challenge which is dealt with in this studies. In some prior works, although the envelop of the periodic components are estimated with quite appreciable accuracy, the DC parts of the reconstructed components are not identified. In this study it is mathematically shown that the correlation between the original component signal and the reconstructed signal is maximum when the DC part of the whole signal is distributed equally among all of its constituents. The DC part of the whole signal is easily evaluated, as it is same as the projection of the whole signal to the ramanujan subspace $S_1$.

\section{Period Estimation}
Let us suppose we have a signal $X$ of length $N$ that is made up of $k$ hidden components $X_{p_1},X_{p_2},..X{p_k}$ (where each component $X_{p_i}$ ($0 < i < k $) is periodic with period $p_i$) is corrupted with some noise. Let the period of the composite signal be $p$.

Now, when $N \neq c \times p$ for some integer $c$ then we need to apply the fat matrix \cite{Farey} technique to figure out the hidden components. This involves in projection of the signal onto all the ramanujan spaces of size less than $N$ which is rather time consuming. So, we have to estimate the period of the signal and then we need to project the signal only onto the ramanujan spaces of sizes which are factors to the estimated period.

We first discuss a little about the SVD method for period estimation and then discuss our technique for the estimation of the composite period.
  
In the SVD technique for period estimation we make a data matrix of size $\lfloor \frac{N}{P} \rfloor \times P$ for an assumed period of $P < N$ . Period is estimated based on the $\lambda _1 / \lambda _2$ values obtained through the SVD process for the data matrices obtained for each assumed period $P$. When the assumed period $P$ is the actual period $p$ then we get maximum value for $\lambda _1 / \lambda _2$ and $p$ is identified.

\subsection*{Data Matrix formation:}
\label{DataMatrix}
Data Matrix $D_P$ for an assumed period $P$, is formed by dividing the signal into blocks of size $P$ and omitting the last portion of the data if its size is less than P  and then taking each block as a row of a $\lfloor \frac{N}{P} \rfloor \times P$ size matrix.
\begin{example}

For $X = [1,2,3,1,2,3,1,2]$ and for an assumed period of $P$ = 3, the data matrix $D_3$ is as follows:
\[
D_3  =
\begin{bmatrix}
1 & 2 & 3\\
1 & 2 & 3
\end{bmatrix}
\] 
\end{example}
\subsection{The Technique}
\label{periodfinder}
Instead of running SVD on the obtained data matrices we employ a different technique as presented below for the estimation of the period. It is also shown later in \ref{hiddenperiods} that information about the hidden periods as well can be obtained through this technique.
\begin{theorem}{\textit{Minimum Variance}.}\label{Minimum Variance}\\
The variance of each and every column vector of data matrix formed by chopping $X$ at blocks of $P$ is 0 iff $P$ is the composite period $p$ or a multiple of it.
\end{theorem}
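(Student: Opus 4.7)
The plan is to prove the two directions separately, relying on the structure of the data matrix and on the minimality of the composite period $p$.

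For the forward direction, I would start by explicitly writing the $(i,j)$ entry of $D_P$ as $X[(i-1)P+j]$, so that the $j$-th column consists of the subsampled sequence $X[j], X[P+j], X[2P+j], \ldots, X[(\lfloor N/P\rfloor-1)P+j]$. If $P = cp$ for some positive integer $c$, then for every $i$ we have $X[(i-1)P+j] = X[(i-1)cp+j] = X[j]$ by applying the periodicity relation $X[m+p]=X[m]$ exactly $(i-1)c$ times. Hence every column of $D_P$ is a constant vector, and its variance vanishes. This step is essentially bookkeeping once one unwinds the indexing of the data matrix.

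For the reverse direction, suppose each column of $D_P$ has zero variance. Then, since variance of a finite list of real numbers is zero iff all entries are equal, we get $X[(i-1)P+j] = X[j]$ for every column index $j \in \{1,\dots,P\}$ and every row index $i$ in the valid range. Equivalently, $X[m+P] = X[m]$ for all $m$ in $\{1,\dots,(\lfloor N/P\rfloor-1)P\}$. Thus $P$ is a period of the observed segment of $X$, in addition to the already assumed fundamental period $p$. The key step is now to argue that a sequence which admits two periods $p$ and $P$ on a sufficiently long window must admit $\gcd(P,p)$ as a period as well; this is the Fine--Wilf style argument, which applies as long as the observation window has length at least $P+p-\gcd(P,p)$ (something I would state explicitly as a mild hypothesis on $N$, or derive from the standing assumption that the signal contains at least a few full periods). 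Once $\gcd(P,p)$ is established as a period, minimality of $p$ forces $\gcd(P,p) \geq p$, hence $\gcd(P,p)=p$, i.e. $p \mid P$.

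The main obstacle I anticipate is the reverse direction: the column-variance condition only gives periodicity on the truncated window, so to conclude $p \mid P$ one really does need the Fine--Wilf type reasoning (or an equivalent direct argument that walks along the indices, using $X[m]=X[m+P]=X[m+P-p]$ to produce a shorter period and deriving a contradiction with the minimality of $p$ unless $p\mid P$). A cleaner alternative that I would consider is to give the direct proof by induction on $P \bmod p$: write $P = qp+r$ with $0\le r<p$; then $X[m+r] = X[m+P-qp] = X[m+P] = X[m]$ for all valid $m$, so $r$ is itself a period; by minimality of $p$, $r=0$, which gives $p\mid P$ without explicitly invoking Fine--Wilf. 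This direct approach is attractive because it sidesteps the need to quote an external combinatorial lemma and only uses the stated hypotheses, though one still has to verify that enough samples are present for the identity $X[m+r]=X[m]$ to hold on a window that reveals $r$ as a genuine period rather than a coincidence over too few samples.
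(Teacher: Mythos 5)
Your forward direction is exactly the paper's: unwind the indexing of $D_P$, apply $X[m+p]=X[m]$ repeatedly to see each column is constant, hence has zero variance. The substantive difference is in the reverse direction. The paper's "only if" part is essentially a one-line assertion: it says that constant columns are "possible only if $P'$ is a period or multiple of a period, which is a contradiction," thereby assuming without proof the key fact that any period of $X$ must be a multiple of the fundamental period $p$. You actually prove that fact: from zero column variance you extract $X[m+P]=X[m]$ on the observed window, write $P=qp+r$ with $0\le r<p$, combine with the global $p$-periodicity to get $X[m+r]=X[m]$, and invoke minimality of $p$ to force $r=0$, i.e.\ $p\mid P$. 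This division-algorithm (or Fine--Wilf) step is precisely what is missing from the paper, and you are also right to flag the only genuine hypothesis it needs, namely that the data window be long enough that the derived identity $X[m+r]=X[m]$ certifies $r$ as a true period rather than an artifact of too few samples; the paper is silent on this as well. In short, your argument is a correct and strictly more rigorous version of the paper's proof: same easy half, but with the hand-waved half replaced by the actual number-theoretic content.
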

\begin{proof}
\textbf{If Part}:\\
When the assumed period P is the composite period $p$ or a multiple of $p$ say $m\times p$ then data matrix $D_{mp}$ obtained is as follows:
\begin{align}
D_{mp} = 
\begin{bmatrix}
X[0] &\dots & X[i] &\dots &X[mp-1]\\
X[mp] &\dots & X[mp+i] &\dots & X[2mp-1]\\
\vdots &\dots &\vdots &\dots &\vdots\\
\end{bmatrix}
\end{align}
Now, consider any column of $D_{mp}$, say the $i^{th}$ column, $C_i$. We have,
\begin{align}
\label{eq:minvecvar}
C_i =
\begin{bmatrix}
X[i] & X[mp+i] & X[2mp+i] &\dots&
\end{bmatrix}
^T
\end{align}
As $p$ is the composite period of the signal, we have $X[np+i] = X[i]$ for some integer $n$. Thus, Eq.\ref{eq:minvecvar} boils down to 
\begin{align}
C_i =
\begin{bmatrix}
X[i] & X[i] & X[i] &\dots&
\end{bmatrix}
^T
\end{align}
As all the elements of the columns are same, the variance is zero.\\
\textbf{Only If Part}:\\
Let us suppose there exists a $P'$ for which all the columns of the data matrix ($D_P$) formed by chopping $X$ at blocks of $P'$ is zero and $P'$ is not a multiple of $p$.

Since, each and every column of $D_{P'}$ has variance zero we have that all the elements of any given column to be same. This is possible only if $P'$ is a period or multiple of a period which is a contradiction.     
\end{proof}

Our idea is to form data matrix for each assumed period to the noisy signal $P$ and find variance of all columns. The value of $P$ for which the variance is minimized is our period.

\subsection{Hidden Periods}
\label{hiddenperiods}
The method described in \ref{periodfinder} will tell us about the hidden periods as well.

\begin{theorem}{Hidden Components\\}
\label{thm:hidden period}
Let $X_{p_i}$ be one of the component of $X$ that has a period of $p_i$. Now, when the data matrix $D_{mp_i}$ formed by chopping at lengths of $mp_i$ is considered the variance of each column has contributions due to the other component signals only.
\end{theorem}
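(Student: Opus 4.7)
The plan is to reduce this to Theorem~\ref{Minimum Variance} by exploiting linearity. Since $X$ is a sum of its periodic components (plus possible noise), $X = X_{p_1} + X_{p_2} + \cdots + X_{p_k}$, and since taking the $j^{\text{th}}$ entry is a linear operation, every column of the data matrix $D_{mp_i}$ formed from $X$ decomposes as the entry-wise sum of the corresponding columns of the data matrices $D_{mp_i}^{(l)}$ formed from each component $X_{p_l}$ individually. So I would first fix an arbitrary column index $j$ and write
\begin{align}
C_j \;=\; C_j^{(1)} + C_j^{(2)} + \cdots + C_j^{(k)},
\end{align}
where $C_j^{(l)}$ denotes the $j^{\text{th}}$ column of $D_{mp_i}^{(l)}$.

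Next, I would single out the term $C_j^{(i)}$, whose entries are $X_{p_i}[j],\,X_{p_i}[mp_i+j],\,X_{p_i}[2mp_i+j],\dots$. Because $mp_i$ is an integer multiple of $p_i$, the "If Part" of Theorem~\ref{Minimum Variance} applied to the pure periodic signal $X_{p_i}$ shows that every entry of $C_j^{(i)}$ equals $X_{p_i}[j]$; in other words, $C_j^{(i)}$ is a constant vector. The standard fact $\mathrm{Var}(U + c\mathbf{1}) = \mathrm{Var}(U)$ for any constant vector $c\mathbf{1}$ then gives
\begin{align}
\mathrm{Var}(C_j) \;=\; \mathrm{Var}\!\Bigl(\sum_{l \neq i} C_j^{(l)}\Bigr),
\end{align}
so the contribution of $X_{p_i}$ to the column variance vanishes and only the remaining components (and any noise term) contribute, which is exactly the claim.

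The routine calculation here is essentially a bookkeeping step; the only substantive observation is the one that drove Theorem~\ref{Minimum Variance}, namely that $p_i \mid mp_i$ forces every column drawn from $X_{p_i}$ to be a constant vector. If there is a subtle point to flag, it is simply that the theorem says the variance receives contributions "due to the other component signals only" --- this should be read as an equality of variances and not as a full decomposition of variance into per-component pieces, since the cross-components $\sum_{l\neq i} C_j^{(l)}$ need not be independent or orthogonal in any statistical sense. With that caveat, the argument above is complete in essentially one line of algebra on top of the previous theorem.
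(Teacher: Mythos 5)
Your proposal is correct and takes essentially the same approach as the paper's proof: split each column into the part coming from $X_{p_i}$, which is a constant vector because $p_i$ divides $mp_i$, and the part coming from the remaining components, then invoke the invariance of variance under addition of a constant vector. Your closing caveat, that the statement should be read as an equality of variances rather than an additive per-component decomposition, is a sensible clarification that the paper leaves implicit.
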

\begin{proof}
We have data matrix $D_{mp_i}$ as:
\begin{align}
D_{mp_i} & = 
\begin{bmatrix}
X[0] &\dots & X[i] &\dots &X[mp_i-1]\\
X[mp_i] &\dots & X[mp_i+i] &\dots & X[2mp_i-1]\\
\vdots &\dots &\vdots &\dots &\vdots\\
\end{bmatrix}
\end{align}
Now, consider any column of $D_{mp_i}$, say the $j^{th}$ column, $C_j$. Let us denote $X' = X - X_{mp_i}$. We have,
\begin{align}
\begin{split}
C_j
& =
\begin{bmatrix}
X[j] & X[mp_i+j] & X[2mp_i+j] &\dots \\
\end{bmatrix}^T \\
& = 
\begin{bmatrix}
X_{mp_1}[j] & X_{mp_1}[mp_i+j] & X_{mp_1}[2mp_i+j] \dots
\end{bmatrix} ^T \\  
& \quad + 
\begin{bmatrix}
X'[j] & X'[mp_i+j] & X'[2mp_i+j] \dots \\
\end{bmatrix} ^T\\
& =
\begin{bmatrix}
X_{mp_i}[j] & X_{mp_i}[mp_i+j] & X_{mp_i}[2mp_i+j] \dots
\end{bmatrix} ^T \\
& \quad + 
\begin{bmatrix}
X'[j] & X'[mp_i+j] & X'[2mp_i+j] \dots \\
\end{bmatrix} ^T \\
& =
\begin{bmatrix}
X_{mp_i}[j] & X_{mp_i}[j] & X_{mp_i}[j] \dots
\end{bmatrix} ^T \\
& \quad + 
\begin{bmatrix}
X'[j] & X'[mp_i+j] & X'[2mp_i+j] \dots \\
\end{bmatrix} ^T \\
\end{split}
\end{align}
Variance of a data set is invariant to constant addition. This concludes the proof. 
\end{proof}

Let us call the plot of cumulative variance of all the columns against assumed periods as variance graph or variance plot.
\begin{figure}[ht]
	\centering
	\includegraphics[scale = 0.4]{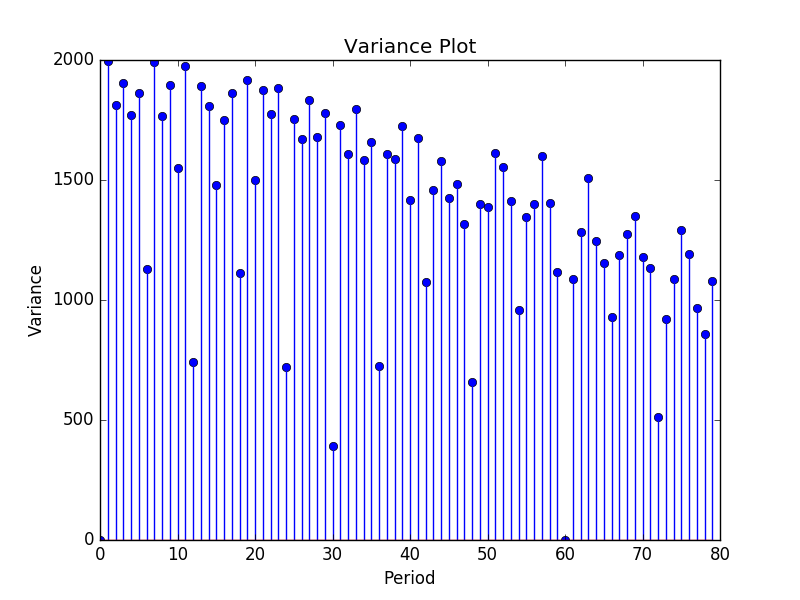}
    \caption{Variance Plot of the signal when there is no noise in the signal}
\end{figure}
\begin{figure}[ht]
	\centering
	\includegraphics[scale = 0.3]{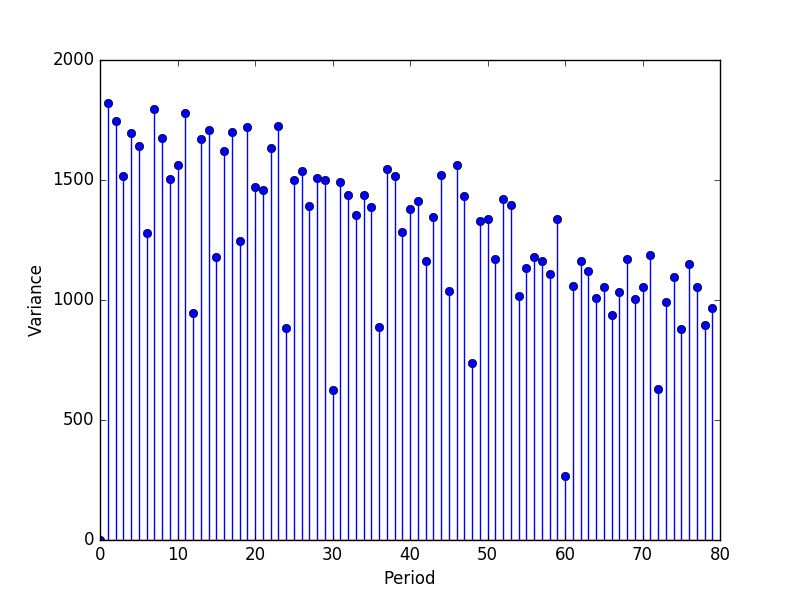}
    \caption{Variance plot of the signal when it is corrupted with 14.56 dB of SNR}
\end{figure}
\begin{figure}[ht]
	\centering
	\includegraphics[scale = 0.3]{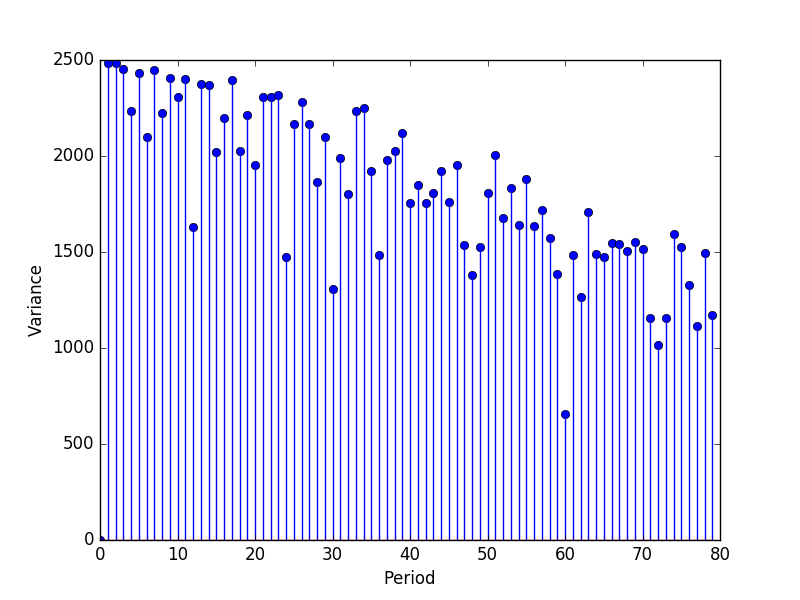}
    \caption{Variance plot of the signal when it is corrupted with 9 dB of SNR}
\end{figure}

\begin{definition}{Dip and Dip Magnitude\\}
We define dip as the point where the magnitude of variance is less than its immediate neighbours in the variance graph.

If there is a dip at an assumed period $P$, in the variance graph,\textit{var}, we defined the following dip magnitude measures on it:
\begin{align}
\begin{split}
1 &\quad max(var) - var[P]\\ 
2 &\quad max(var) + var[P-1] + var[P+1] - 3var[P]\\ \label{dip-mag-2}
\end{split}
\end{align}  
\end{definition}

\begin{observation}
Dips can be observed at assumed period lengths of integral multiples of hidden periods in a variance graph. 
\end{observation}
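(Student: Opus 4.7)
The plan is to build directly on Theorem \ref{thm:hidden period}. Writing the noisy signal as $X = \sum_{i=1}^{k} X_{p_i} + \eta$, I observe that the data matrix $D_P$ is produced by a linear reshaping of $X$, so every column $C_j$ of $D_P$ decomposes additively into the corresponding columns of the component data matrices $D_P^{(i)}$ and the noise data matrix $D_P^{(\eta)}$. Under the working assumption that the hidden components and the noise are mutually uncorrelated when sampled along the rows of the reshape, the column variance splits as $\operatorname{Var}(C_j) \approx \sum_i \operatorname{Var}\bigl(C_j^{(i)}\bigr) + \operatorname{Var}\bigl(C_j^{(\eta)}\bigr)$, and hence the cumulative variance plotted against $P$ is, to first order, the sum of the individual cumulative variance curves.

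The first step would be to evaluate this decomposition at $P = m p_i$. Applying the \emph{If Part} of Theorem \ref{Minimum Variance} to the component $X_{p_i}$ (equivalently, using Theorem \ref{thm:hidden period}), every column of $D_{mp_i}^{(i)}$ is constant, so the contribution of $X_{p_i}$ to the cumulative variance at $P = m p_i$ is exactly zero. The total variance at that point is therefore $\sum_{j \ne i} V_j(m p_i) + V_\eta(m p_i)$, where $V_j$ denotes the cumulative column-variance contributed by $X_{p_j}$.

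The second step is to compare this value with the variance at the immediate neighbours $P = m p_i \pm 1$. Assuming $m p_i \pm 1$ is not itself an integer multiple of any $p_j$, the \emph{Only If Part} of Theorem \ref{Minimum Variance} applied to each component guarantees that every $V_j(m p_i \pm 1)$ is strictly positive, and in particular $V_i(m p_i \pm 1) > 0$ while $V_i(m p_i) = 0$. Provided the changes in $V_j$ for $j \ne i$ and in $V_\eta$ from $P$ to $P \pm 1$ are small compared to this strictly positive jump in $V_i$, the total variance at $m p_i \pm 1$ exceeds that at $m p_i$, which is exactly the dip condition in the definition above.

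The main obstacle is twofold. First, the additive split of variances is only approximate; it relies on an uncorrelatedness assumption between the components and the noise when reshaped. A fully rigorous statement would quantify the cross-term and show it is dominated by the drop $V_i(m p_i \pm 1) - V_i(m p_i)$. Second, one must exclude the pathological case in which $m p_i \pm 1$ happens to be a multiple of some other hidden period $p_j$; in that case the neighbour would also be suppressed and the dip could degenerate. For the observed signals this alignment is rare, but any clean proof would require either a coprimality hypothesis on the $\{p_i\}$ or a probabilistic model of the component waveforms that lets one bound the expected magnitude of the dip in terms of the energy of $X_{p_i}$, connecting back to the dip-magnitude measures introduced in Eq.~\eqref{dip-mag-2}.
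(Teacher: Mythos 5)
Your core idea is the same as the paper's: the dip at $P = mp_i$ happens because the contribution of $X_{p_i}$ to the column variances vanishes there, by Theorem~\ref{thm:hidden period}. The paper's own proof is exactly this one sentence and stops. Where you differ is in the framing and in how far you push it. You set up an approximate additive decomposition $\operatorname{Var}(C_j) \approx \sum_i \operatorname{Var}(C_j^{(i)}) + \operatorname{Var}(C_j^{(\eta)})$ under an uncorrelatedness assumption; note that for the vanishing at $P = mp_i$ itself this assumption is unnecessary, since Theorem~\ref{thm:hidden period} gives the exact statement $\operatorname{Var}(C_j) = \operatorname{Var}(C_j - C_j^{(i)})$ because $C_j^{(i)}$ is a constant vector --- no cross-term appears for that component. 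The uncorrelatedness is only needed for your second step, the comparison with the neighbours $mp_i \pm 1$, which is the step the definition of a dip actually requires and which the paper's proof omits entirely. The two caveats you raise --- that the cross-covariance terms at the neighbours must be dominated by the strictly positive jump in $V_i$, and that $mp_i \pm 1$ might accidentally be a multiple of another hidden period (or exceed the point where the data matrix has at least two rows, so that column variances are even defined) --- are genuine gaps in the paper's argument as written, not just in yours. In short, your proposal is a more honest version of the same proof: it buys an explicit account of what "dip" means relative to the neighbours and a clear list of the hypotheses one would need to make the observation a theorem, at the cost of an approximation the paper's exact constant-column argument avoids for the central point.
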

\begin{proof}
A 'dip' in variance is observed in the variance graph when data matrices are constructed by chopping the signal at block lengths of hidden periods or their multiples; as contribution to the variances due to corresponding signal component is zero by theorem \ref{thm:hidden period}
\end{proof}

As the composite period is the (least common) multiple of each and every hidden period a greater dip is expected at the composite period.

\subsection{The Algorithm}
In this section we discuss about two algorithms that we implemented based on the Theorems \ref{Minimum Variance} and \ref{thm:hidden period}.
\subsubsection{Period Finder Algorithm}\label{algo:period finder}
First we create the data matrices $D_P$'s for various assumed periods $P$ and variance graph is obtained. This variance graph is processed to identify dips and from here a dip graph is obtained. A dip graph is the plot of dip-magnitude vs assumed period $P$ where ever a dip is found. (Here dip-magnitude is taken to be the second measure defined on the dip.)

As per the theorems \ref{Minimum Variance} and \ref{thm:hidden period} a dip corresponds to a multiple of hidden period or a hidden period itself. The value at which dip-graph has highest magnitude should correspond to the composite period $p$ provided the noise is not periodic.

Estimation of the composite period depends on the strength of the noise present in the signal. If the noise is too high then it corrupts the value of the variance at the composite period so much that the estimated period may not be the composite period.

\begin{observation}
The dip-magnitude at a given hidden period or a multiple of it depends on the relative strength of the hidden component w.r.t. the remaining other components apart from the noise present to the signal
\end{observation}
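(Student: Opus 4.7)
The plan is to apply Theorem~\ref{thm:hidden period} component-wise and then trace how each piece enters the dip-magnitude measure~(\ref{dip-mag-2}). Writing $X=\sum_k X_{p_k}+\eta$ with $\eta$ the additive noise, let $v(P)$ denote the cumulative column variance of $D_P$, and decompose
\begin{align*}
v(P)\ \approx\ \sum_k V_k(P)+V_\eta(P),
\end{align*}
where $V_k(P)$ is the contribution that $X_{p_k}$ alone would make to the cumulative column variance and the approximation hides cross-covariance terms between different components and the noise. Theorem~\ref{thm:hidden period} says that $V_i(P)=0$ exactly when $P=mp_i$, since in that case every column built from $X_{p_i}$ is constant, whereas at neighbouring $P$ (and at the argmax used in~(\ref{dip-mag-2})) $V_i(P)$ takes its generic, non-cancelled value, which is directly proportional to the $\ell_2$-energy of $X_{p_i}$.

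First I would substitute this dichotomy into $\max(v)+v[P-1]+v[P+1]-3v[P]$ at $P=mp_i$: the three additive terms retain the contribution of $X_{p_i}$ while the subtracted $3v[P]$ does not, so the net $X_{p_i}$-contribution to the dip magnitude is approximately $3V_i$ evaluated at a generic period. For any other component $X_{p_k}$ with $k\neq i$, $V_k(P)$ is locally smooth in $P$ around $mp_i$ (the chopping at $mp_i\pm 1$ perturbs $V_k(mp_i)$ only slightly), so its contribution to~(\ref{dip-mag-2}) enters as a second difference and is of lower order. For white noise, $V_\eta(P)$ is essentially flat in $P$ and cancels up to a stochastic residual. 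Thus, to leading order, the dip magnitude at $mp_i$ is proportional to the energy of $X_{p_i}$; dividing by the baseline $\max(v)$, which grows with the total energy of all components plus noise, exhibits the dip magnitude as a monotone function of the \emph{relative} strength of $X_{p_i}$ with respect to the remaining components and the noise.

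The main obstacle will be controlling the cross-covariance terms that I dropped and the second differences of $V_k(P)$ for $k\neq i$. For independent noise these vanish in expectation, but between two deterministic periodic components the sample covariance over $\lfloor N/(mp_i)\rfloor$ rows can be of the same order as $V_i$ unless the periods are suitably ``generic''. I would therefore prove the observation under the standing assumptions that $N\gg\mathrm{lcm}(p_1,\ldots,p_K)$ and that the hidden periods are pairwise coprime, in which case a direct counting argument shows that the rows of $D_{mp_i}$ sweep the residue classes of $p_k$ uniformly and the cross-covariance terms and second differences are $O\bigl(1/\lfloor N/(mp_i)\rfloor\bigr)$ times the involved energies. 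Under these hypotheses the leading term $3V_i$ dominates and the claimed relative-strength dependence holds rigorously; without them, the statement should be read as an ``in expectation'' claim, which is likely what the authors intend given its qualitative phrasing.
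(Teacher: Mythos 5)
Your core argument is the same one the paper uses: both rest entirely on Theorem~\ref{thm:hidden period}, which says that at $P=mp_i$ the column variances carry no contribution from $X_{p_i}$, so the depth of the dip there measures what $X_{p_i}$ \emph{would have} contributed relative to what the remaining components and the noise actually do contribute. The difference is one of rigor rather than of route. The paper's proof is two sentences and silently assumes that the cumulative column variance decomposes additively over the hidden components and the noise; you make that decomposition explicit, trace each piece through the specific dip measure in~(\ref{dip-mag-2}) (correctly obtaining the leading term $\approx 3V_i$ from the three positive terms minus the vanishing $3v[P]$), and --- most valuably --- you isolate the cross-covariance terms between distinct periodic components as the real obstruction, proposing concrete hypotheses ($N\gg\mathrm{lcm}$ of the periods, pairwise coprimality so the rows sweep residue classes uniformly) under which they are $O(1/\lfloor N/(mp_i)\rfloor)$ and the claim holds rigorously. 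That gap is genuinely present in the paper's argument: sample variance of a sum is not the sum of sample variances, and for short data or commensurate periods the cross terms can be comparable to $V_i$, so your conditional formulation is the honest version of the observation. In short, same idea, but your write-up supplies the quantitative bookkeeping and the missing hypotheses that the paper's qualitative proof omits.
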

\begin{proof}
From Theorem\ref{thm:hidden period} we have seen that magnitude of variance of the data matrix $D_P$ formed by chopping $X$ at lengths of $P$ which is a hidden period will have no contribution due to the component $X_P$. Stated other wise, has contribution because of the other components only. Thus, if the relative strength of $X_P$ w.r.t to the remaining components as well as noise is less then it will have only a small dip-magnitude at that corresponding hidden period.
\end{proof}
\alglanguage{pseudocode}
\begin{algorithm}[H]
\small
\caption{Minimum Variance Period Finder}\label{alg:MVPF}
\begin{algorithmic}[1] 
\State var = zeroes($\lfloor N/2 \rfloor$)
\State dip = zeroes($\lfloor N/2 \rfloor$)
\Procedure{Period Finder}{$\overset{\wedge}{X}$}
	\For {j in 2 to $\lfloor N/2 \rfloor$}
		\State Create data matrix $D_j$
			\For {i in 1 to j}
				\State var[j-1] = var[j-1] + variance($D_j$[:,i])
			\EndFor
			\State var[j-1] = var[j-1]/j
	 \EndFor
	 \State prev = 1; cur = 2, nxt = 3;
	 \For {j in 1 : $\lfloor N/2 \rfloor$ - 1}
	 	\If {var[cur] < var[prev] \textbf{and} var[cur] < var[next] } 
	 		\State dip[cur] = maximum(var) - var[cur]
	 		\State dip[cur] += var[prev]+var[next] - 2*var[cur]
	 		\State dip[cur] = (dip[cur])$^4$
		\EndIf
	\EndFor
\State \Return maximum(dip)
\EndProcedure 
\end{algorithmic}
\end{algorithm}
Step (5) of the algorithm creates a data matrix $D_j$ at an assumed period $j$. Steps (6-9) will calculate the cumulative variance of all columns of $D_j$ averaged over $j$. These steps (5-9) run in $O(n^2$) time and step(4) loops over this $O(n^2)$ process. Thus steps (4-10) run in $O(n^3$) time. Steps (11-19) extract the period with maximum dip magnitude. Here, the dip magnitude is taken as the fourth power of second dip-magnitude measure we defined. This extraction runs in $O(n)$. Thus entire procedure runs in $O(n^3$) time.
\subsubsection{Modified period finder - Monte Carlo Algorithm}
We have another important observation to make here before going further
\begin{observation} \label{obs:subset}
The variance graph obtained by taking variance of a subset of every column vector of some constant length taken from the data matrix formed by chopping $X$ at various assumed period lengths would estimate the composite period correctly.  
\end{observation}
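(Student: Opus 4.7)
The plan is to reduce the claim directly to the two results already proven for the full-column case, namely Theorem \ref{Minimum Variance} and Theorem \ref{thm:hidden period}, and then to argue that restricting to a constant-size subset preserves the dips at (multiples of) the composite period while only losing information at "non-dip" values of $P$ in a controlled way. Concretely, fix a subset length $k$, and for each assumed period $P$ let $I_P \subseteq \{0,1,\dots,\lfloor N/P\rfloor - 1\}$ be the set of row-indices along which each column of $D_P$ is sampled before computing variance. Cumulatively summing these subset-variances over columns of $D_P$ defines the "subset variance graph" referenced in the statement.

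First I would dispatch the easy direction. If $P$ is a multiple of the composite period $p$, then by Theorem \ref{Minimum Variance} every column of $D_P$ is exactly constant, so its restriction to the rows indexed by $I_P$ is also constant and has variance zero. Summing over columns, the subset variance at such $P$ is zero, identical to the full-column value. The same calculation, applied component-wise using the decomposition in the proof of Theorem \ref{thm:hidden period}, shows that at multiples of an individual hidden period $p_i$ the contribution of $X_{p_i}$ to the subset variance is exactly zero; only the other components and the noise contribute. Thus all the dips that exist in the original variance graph are preserved, with the same "zero-contribution" structure, in the subset graph.

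Next I would control what happens at values of $P$ that are \emph{not} multiples of any hidden period. Here the columns of $D_P$ are genuinely non-constant, and the question is whether subsampling could accidentally make them look constant. If $I_P$ is chosen uniformly at random (the Monte Carlo ingredient used in the next algorithm), then the subset variance is an unbiased estimator of the full column variance, with fluctuations of order $O(1/\sqrt{k})$ around the true value. Provided $k$ is chosen modestly (for instance $k = \Theta(\log N)$), a standard concentration bound together with a union bound over the $O(N)$ candidate periods shows that with high probability the subset variance at every "non-dip" $P$ stays bounded away from zero, while the subset variance at multiples of $p$ stays exactly zero. Feeding this subset variance graph into the same dip-extraction procedure used in Algorithm \ref{alg:MVPF}, and using the fact that $p$ is the lcm of the hidden periods (so all hidden components vanish simultaneously only at multiples of $p$), yields the largest dip at $p$, giving the correct period estimate.

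The main obstacle is the second step: making rigorous the informal statement that a random length-$k$ subset of a non-constant length-$\lfloor N/P\rfloor$ column is unlikely to appear constant. A careful version would need a lower bound on the variability of each such "bad" column, after which a Hoeffding/Chebyshev-type bound gives the failure probability exponential in $k$; the union bound then absorbs the $O(N)$ factor from quantifying over $P$. Everything else — the zero-variance structure at multiples of $p$ and of each $p_i$, and the identification of $p$ as the largest dip — is inherited immediately from Theorems \ref{Minimum Variance} and \ref{thm:hidden period}.
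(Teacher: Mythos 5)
Your proposal is far more ambitious than what the paper actually does: the paper's entire proof of Observation \ref{obs:subset} is two sentences asserting that at the composite period the (subset) variance ``always depends on the noise'' and that ``within permissible levels of the noise'' it remains the minimum. You, by contrast, split the claim into the easy direction (a subsample of a constant column is constant, so the dips at multiples of $p$ and of each $p_i$ survive subsampling exactly as in Theorems \ref{Minimum Variance} and \ref{thm:hidden period}) and the hard direction (a random constant-size subsample of a genuinely non-constant column should not look constant), and you propose to handle the latter with concentration plus a union bound over the $O(N)$ candidate periods, tying the constant $k$ to $\log N$. This is the argument the paper \emph{should} have given, and it correctly identifies where the real content lies; the paper simply waves at it. Two caveats on your sketch. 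First, your statement that ``the subset variance at multiples of $p$ stays exactly zero'' holds only for the noiseless signal; the observation is invoked for $\overset{\wedge}{X}$, where columns at multiples of $p$ are constant plus noise, so the subset variance there is a $k$-sample estimate of the noise variance and itself fluctuates --- this is exactly the failure mode the paper's ``resends'' mechanism is meant to patch, and your proof would need to show that the gap between (noise variance) and (noise plus signal variance) dominates both fluctuations. Second, your concentration step presupposes a lower bound on the variability of each ``bad'' column, which you flag but do not supply; without some separation assumption on the signal this bound need not exist (a column at a non-multiple $P$ can be nearly constant if the component amplitudes conspire), so the observation as stated is really only true with high probability under such an assumption. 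Neither caveat is addressed by the paper either, so relative to the source your route is strictly more honest about what needs proving.
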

\begin{proof}
At the composite period irrespective of the length of the subset of the column vector considered variance always depends on the noise. With in permissible levels of the noise, always the value of variance obtained at the composite period is minimum.
\end{proof}
To improve upon the run time bound of the \ref{algo:period finder}, instead of accumulating and averaging over all the columns of the data matrix we consider only few (constant) number of columns. Also, by observation\ref{obs:subset} even in these columns, a subset of them is sufficient. So we choose only a fixed length subset of columns from this data. So effectively the algorithm runs in $O(n)$
To counter the effect of error that may enter due to selecting only few number of columns, we will process this using a Monte-Carlo type approach by asking the user to send the signal few($k$) many times and only those assumed periods that are consistently present in all the k runs are considered for period estimation task.

\alglanguage{pseudocode}
\begin{algorithm}[H]
\small
\caption{Monte Carlo Period Finder}\label{alg:MCPF}
\begin{algorithmic}[1] 
\State var = zeroes($\lfloor N/2 \rfloor$)
\State dip = zeroes($\lfloor N/2 \rfloor$)
\State count = zeroes($\lfloor N/2 \rfloor$)
\State resends = $k$
\Procedure{Monte Carlo}{$\overset{\wedge}{X}$}
	\For {l in 1 to resends}
		\For {j in 2 to $\lfloor N/2 \rfloor$}
			\State Create data matrix $D_j$
				\For {i in 1 to j}
					\State var[j-1] = var[j-1] + variance($D_j$[:,i])
				\EndFor
				\State var[j-1] = var[j-1]/j
	 	\EndFor
	 	\State prev = 1; current = 2, next = 3;
	 	\For {j in 1 : $\lfloor N/2 \rfloor$ - 1}
	 		\If {var[current] < var[prev] \textbf{and} var[current] < var[next] } 
	 			\State dip[current] = maximum(var) - var[current]
	 			\State dip[current] += var[prev]+var[next] - 3var[current]
	 			\State dip[current] = (dip[current])$^4$
			\EndIf
		\EndFor
	\State \Return maximum(dip)
	\EndFor
\EndProcedure 
\end{algorithmic}
\end{algorithm}

\section{Finding the hidden components} \label{hidden components}
Let us suppose the hidden components in the given signal $X ^{'}$ be $X_{p_1},X_{p_2},X_{p_3},....,X_{p_k}$. Let us also denote the corrupted version of $X^{'}$ by $\overset{\wedge}{X^{'}}$ Now using the process discussed as in \cite{Ramanujan1}, \cite{Ramanujan2} we project $\overset{\wedge}{X^{'}}$ on to various ramanujan sub spaces that are factors of the estimated composite period $p$. Then $\{p_1,p_2,p_3,...,p_k \}$ are factors of $p$ where $p_1 = 1$ and $p_k = p$ are trivial factors. As $p_1 = 1$ we have corresponding $X_{p_1} = X_1$ to be periodic with period 1. Hence $X_1$ corresponds to just a dc signal. Let the dc level be $d$. We know that ramanujan spaces are mutually orthogonal i.e.,
\begin{equation}
\label{dot product}
X_i^\dagger X_j = \delta _{ij}
\end{equation}   
Now, taking $i = 1$ in Eq.\ref{dot product} and j $\neq$ 1, we have,  
\begin{align}
& \quad\quad  X_1^\dagger X_j = \delta _{1j} \\
& \Rightarrow d \sum\limits_{n = 1}^{n = p}X_j[n] = 0 \\
& \Rightarrow   \sum\limits_{n = 1}^{n = p}X_j[n] = 0 \\
\end{align}
i.e., all $X_j$'s with j $\neq$ 1 have dc value 0. So, the dc component present in each hidden component is transferred to the $X_1$ and all the rest of hidden components are zero mean components. Hence,  hidden components obtained are shifted versions of the actual hidden components (the shift is so much that their dc value is zero) and the dc signal $X_1$ has dc value ($d$) such that
\begin{align}
dc(X_{reconstructed}) = dc(X_1) = dc(X^{'})
\end{align}

When we have no information about the dc value of each hidden component then the reconstructed components are shifted versions of actual hidden components. Let us denote the actual hidden compnents by $X_{i_{act}}$. Then we have 
\begin{align}
X_i + \alpha _i X_1 = X_{i_{act}} \quad \forall i \neq 1 
\end{align}
such that
\begin{align}
\sum \alpha _i = 1 
\end{align}

When information about dc values of hidden components is not available then the $\alpha _i$'s cannot be determined.
Let us suppose we have arbitrarily assigned some value ($\alpha _i ^{'}$) to each $\alpha _i$ in this case. i.e.,
\begin{align}
X_i + \alpha _i ^{'} X_1 = X_{i} ^{'} \quad \forall i \neq 1
\end{align}
and let X represent the sum of all the estimated hidden components. That is,
\begin{align}
\sum\limits_{p_2}^{k} X_{i} ^{'} = X 
\end{align}
We show that all the $\alpha _i$'s should be chosen to be same for the correlation between $X'$ and $X$ to be maximum when no information on dc values of hidden. components is provided.

This can be formulated as follows:\\
\textbf{Maximize}: $X^\dagger X^{'}$\\
\textbf{Subjected to}:
\begin{align}
& X_i + \alpha {_i ^{'}} X_1 = X_{i} ^{'} \quad \forall i \neq 1\\
& \sum \alpha {_i ^{'}} = 1 
\end{align}
This optimization problem is solved using Lagrange method of undetermined multipliers and it yields the result that all $\alpha _i$'s should be same.  

Thus when we have no information about the dc value of the hidden components then to maximize the correlation we need to distribute $X_1$ equally to the rest of the components constructed.
\section{Case Studies}
Here we show various results obtained using the period estimation technique and comparison between our proposed method and SVD method for period estimation as well as signal estimation based on the estimated period are shown. 

We consider the hidden periods of a composite signal are $p_1$ = 8, $p_2$ = 11, $p_3$ = 16. A random signal is added to the composite signal and finally it is considered to be a noisy signal. Let the data length is 4119 whereas the composite period of the signal is 176.

\begin{figure}[H]
	\centering
	\includegraphics[scale = 0.3]{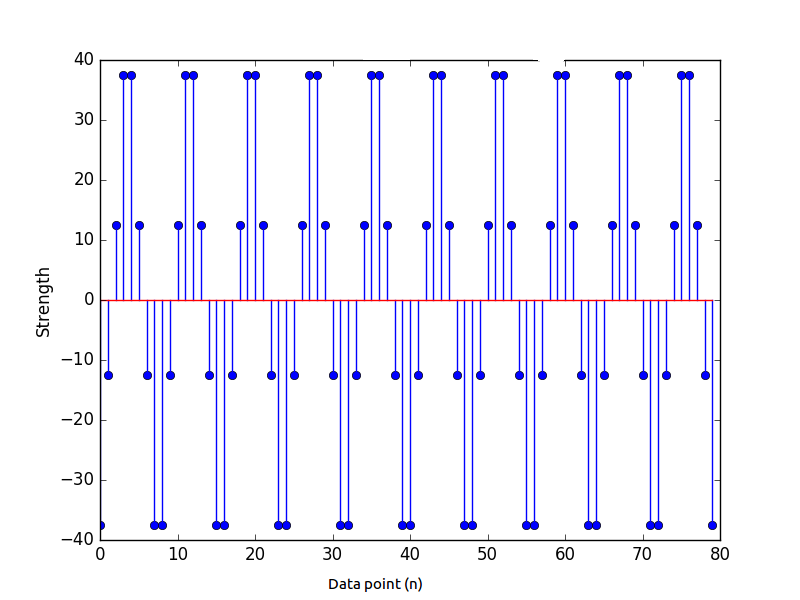}
    \caption{Mean Zero triangular periodic signal with period 8}
\end{figure}
\begin{figure}[H]
	\centering
	\includegraphics[scale = 0.3]{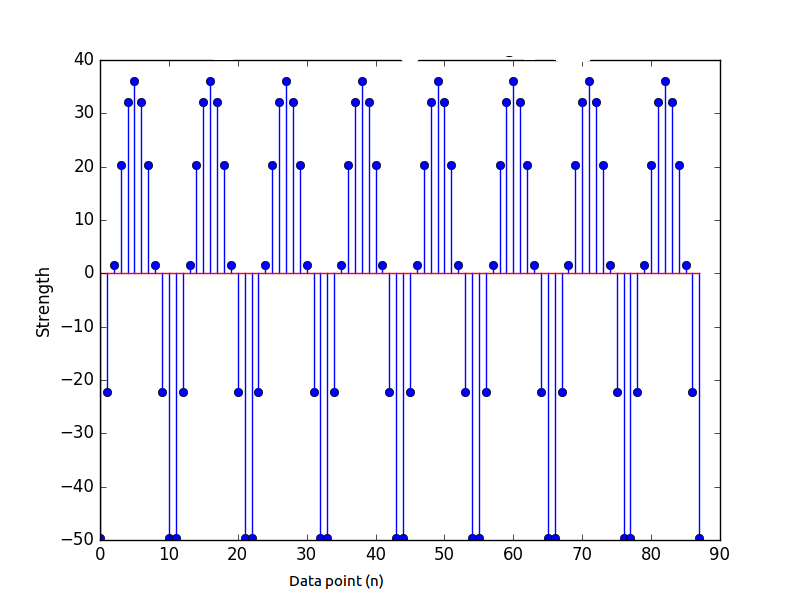}
    \caption{Mean Zero cosine signal with period 11}
\end{figure}
\begin{figure}[H]
	\centering
	\includegraphics[scale = 0.3]{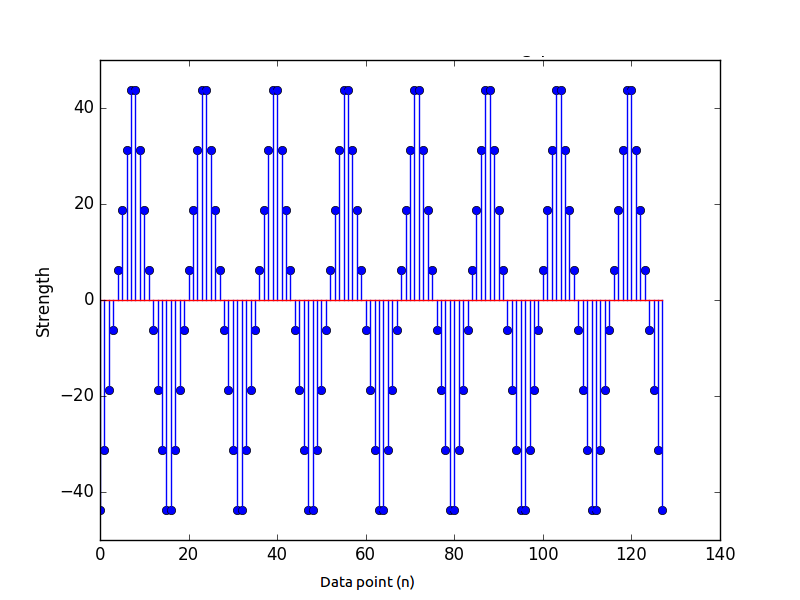}
    \caption{Mean Zero triangular periodic signal with period 16}
\end{figure}
\begin{figure}[H]
	\centering
	\includegraphics[scale = 0.3]{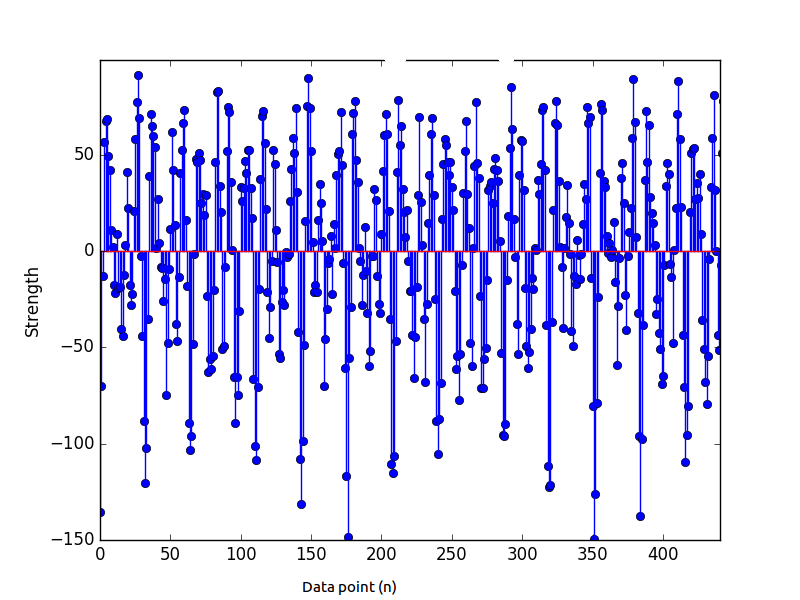}
    \caption{This is the noisy signal composed of above hidden periodicity and a mean zero gaussian noise of 32 dB SNR }
\end{figure} 
From the above noisy signal, to find the composite periods, below is two processes:
\begin{figure}[H]
	\centering
	\includegraphics[scale = 0.3]{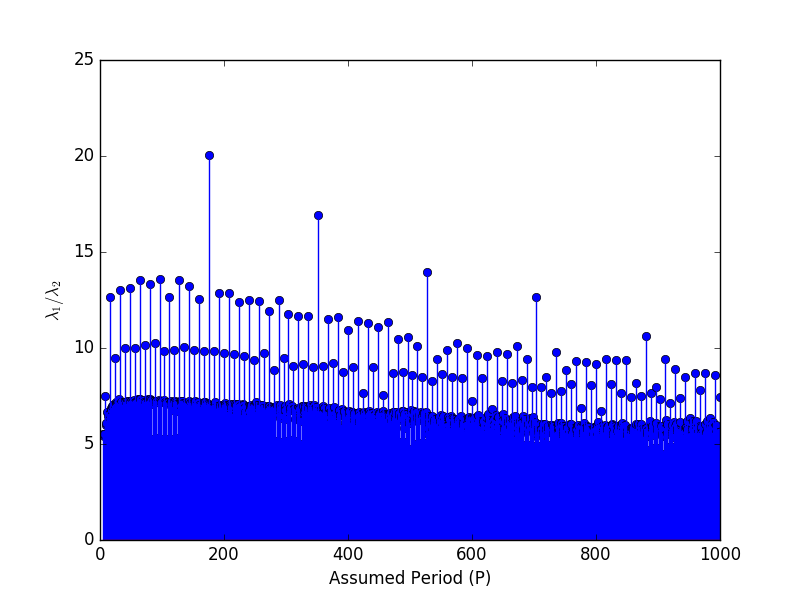}
    \caption{SVD Spectrum of the signal}
\end{figure}
\begin{figure}[H]
	\centering
	\includegraphics[scale = 0.3]{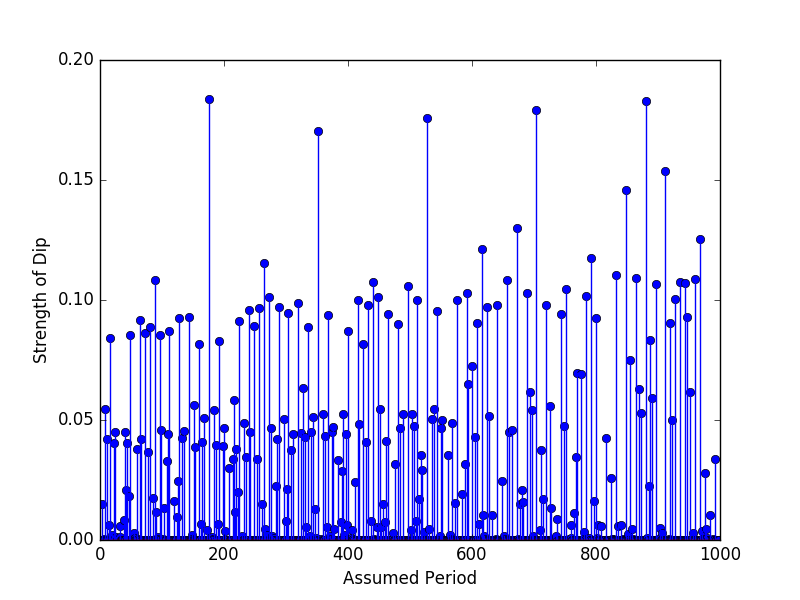}
    \caption{Variance vs. Assumed period}
\end{figure}
In this figure, only hidden or multiple of hidden and composite or multiple of composite period is present rest of all the points are zero. \\
Another advantage of using our proposed method is to compute the composite period length or an integral multiple of composite period length from a very long data length, is very fast, i.e. an order of $O(n)$ algorithm is proposed in this study.
\begin{figure}[H]
	\centering
	\includegraphics[scale = 0.4]{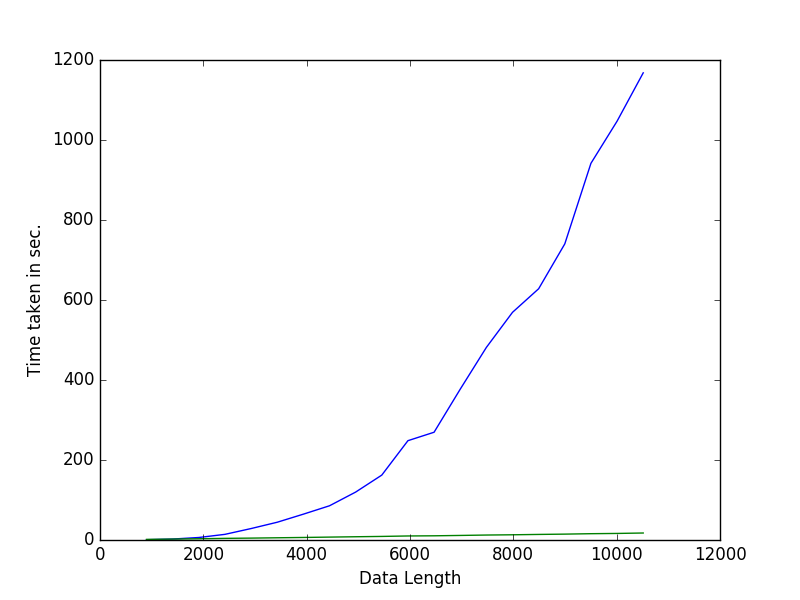}
    \caption{SVD Spectrum vs. Our proposed methodology run time analysis}
\end{figure}
Here the green line is the plot of our proposed method time taken curve and the blue one is the $SVD$ method time taken curve.
\begin{figure}[H]
	\centering
	\includegraphics[scale = 0.3]{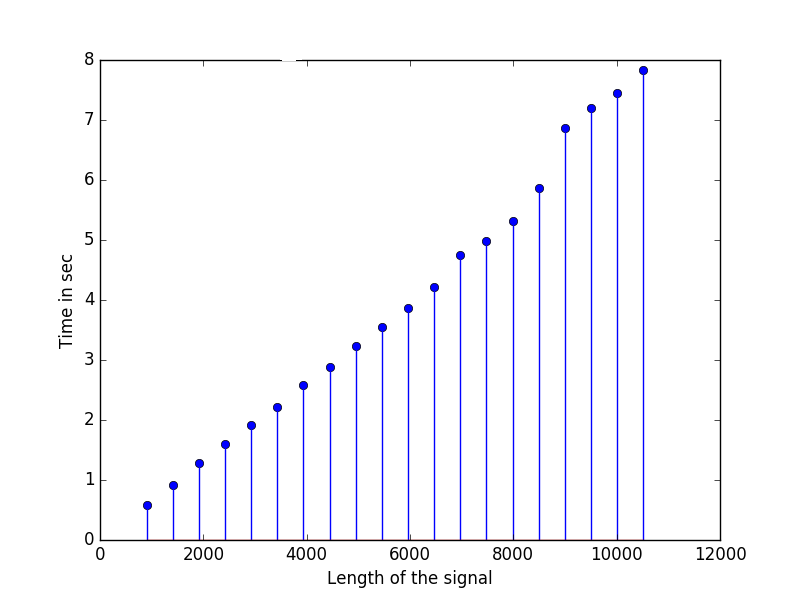}
    \caption{Montecarlo Simulation}
\end{figure}

\begin{figure}[H]
	\centering
	\includegraphics[scale = 0.3]{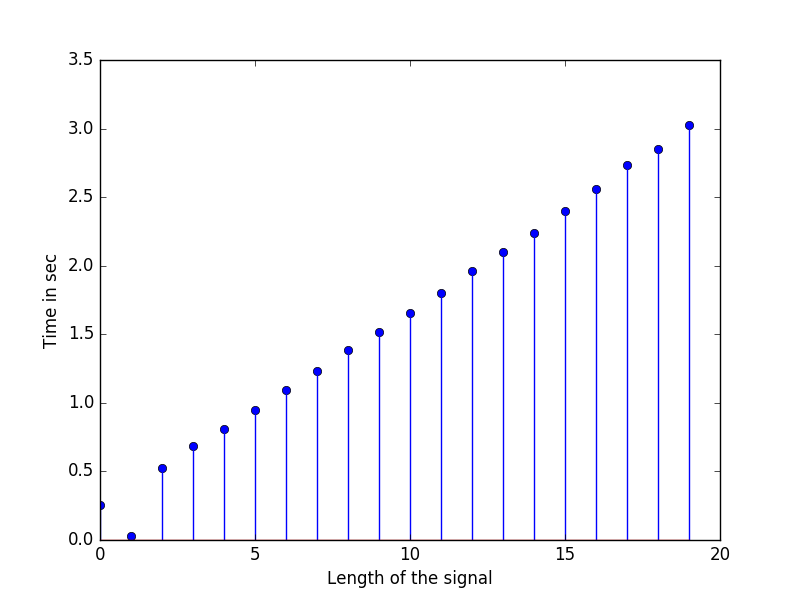}
    \caption{Montecarlo Simulation of time taken by the our proposed methodology when resend value is 2}
\end{figure}

This is the curve of Hit-Miss of finding actual composite period. It shows out of 20 iterations, only $2nd$ time it failed when resends is 2. So, here the hit-miss ratio is $19:1$
\begin{figure}[H]
	\centering
	\includegraphics[scale = 0.3]{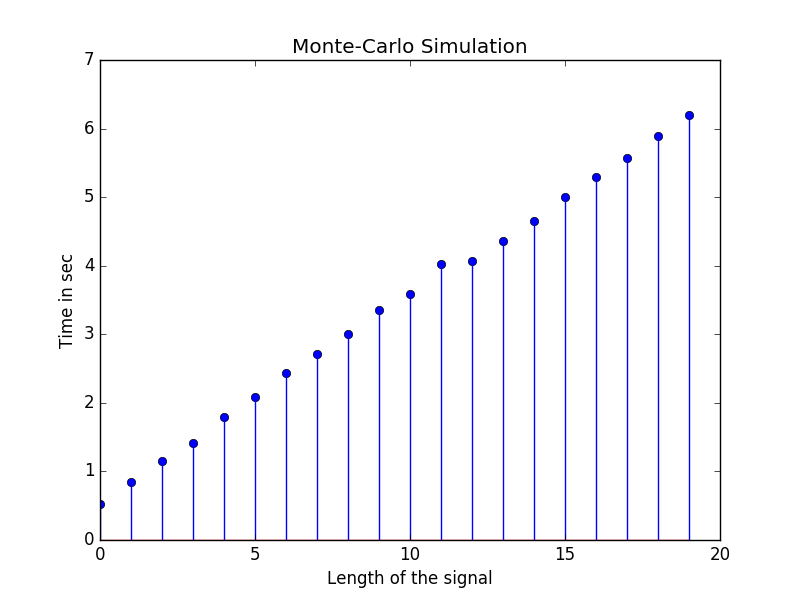}
    \caption{Montecarlo Simulation of time taken by the our proposed methodology when resend value is 5}
\end{figure}

This is the Hit-Miss curve for resends value 5. It shows there is no miss of the composite period detection. So, here the hit-miss ratio is $20:0$ (as there is no miss).
\begin{figure}[H]
	\centering
	\includegraphics[scale = 0.3]{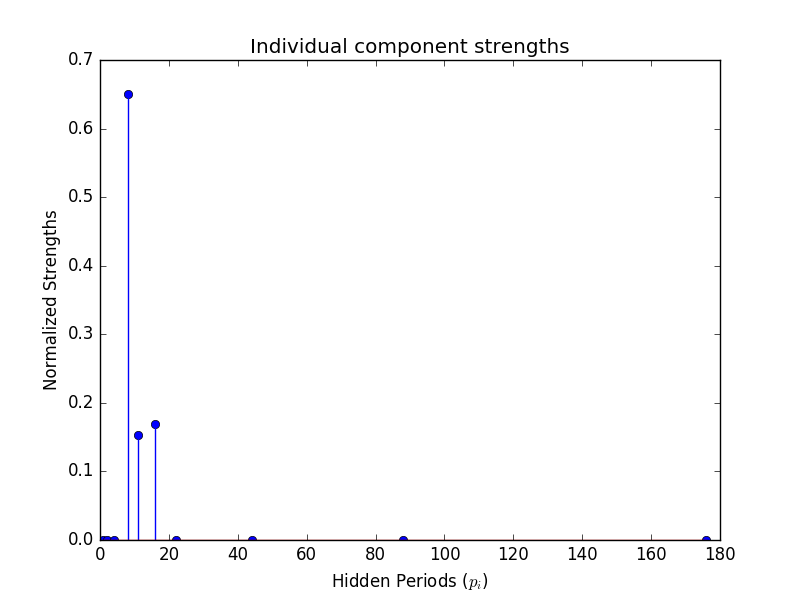}
    \caption{Individual periodic component's normalized strength at an SNR of 35 dB}
\end{figure}

This is the whole signal gets projected on the factor ramanujan space of composite period length. Which also shows that most of the energy contributed by the individual hidden components itself. 
\begin{figure}[H]
	\centering
	\includegraphics[scale = 0.3]{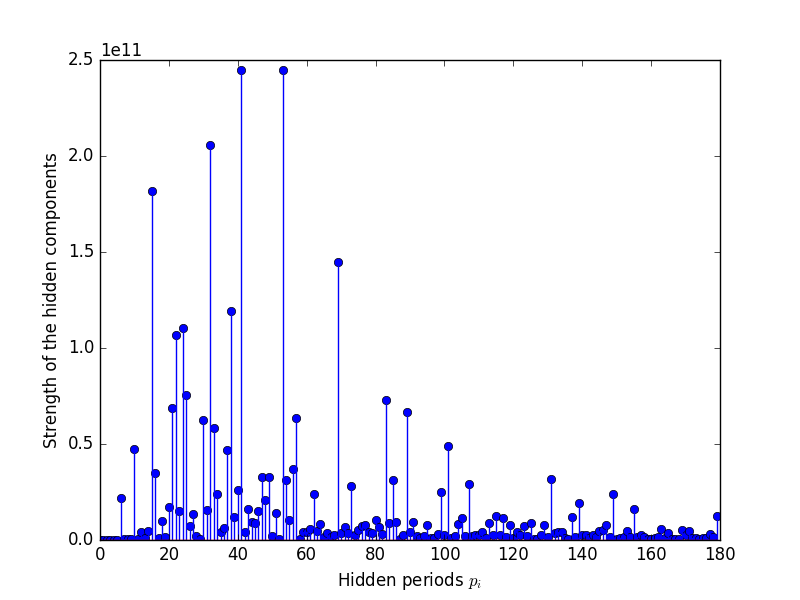}
    \caption{Individual periodic component's strength at a SNR of 35 dB}
\end{figure}
This the Fat-Matrix solution of the whole signal. \\
So, from the fat matrix solution, it is impossible to comment about the hidden periods. Whereas, if we project the signal truncating it at the composite period length or the multiple of composite period length, it is easy to identify the actual hidden components present in a signal.  \\
After following our process, we truncated the signal at the multiple of composite periods and then projected this signal at the factor ramanujan subspace of that composite period to find the hidden components. Below is the figure of the highest error valued component strengths w.r.t the SNR values.
\begin{figure}[H]
	\centering
	\includegraphics[scale = 0.3]{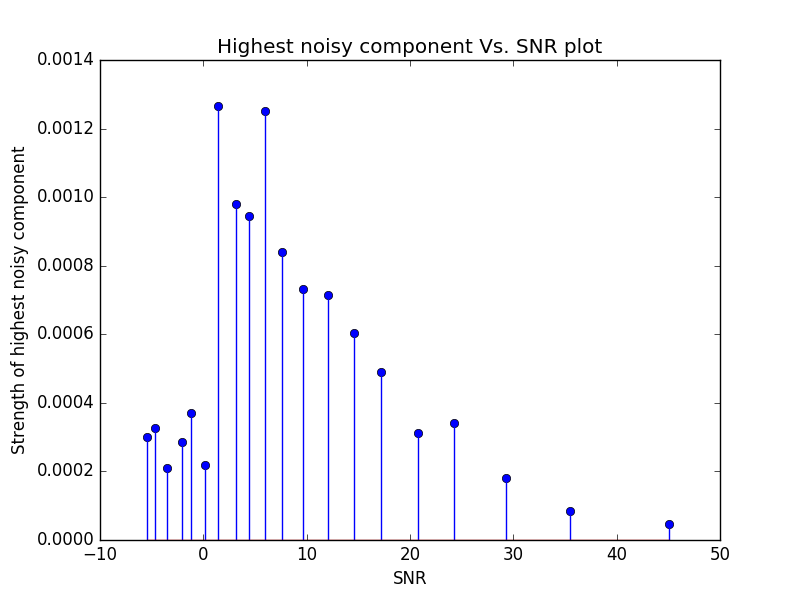}
    \caption{Highest noisy component vs. SNR plot}
\end{figure}
This is the plot of $4th$ highest component normalized strength vs. SNR (in dB) is plotted. Which shows the highest noisy component strength is very low that it won't effect the actual signal. \\
After getting the information about the hidden components, reconstructed hidden periodic signal forms are found. Those are below:
\begin{figure}[H]
	\centering
	\includegraphics[scale = 0.3]{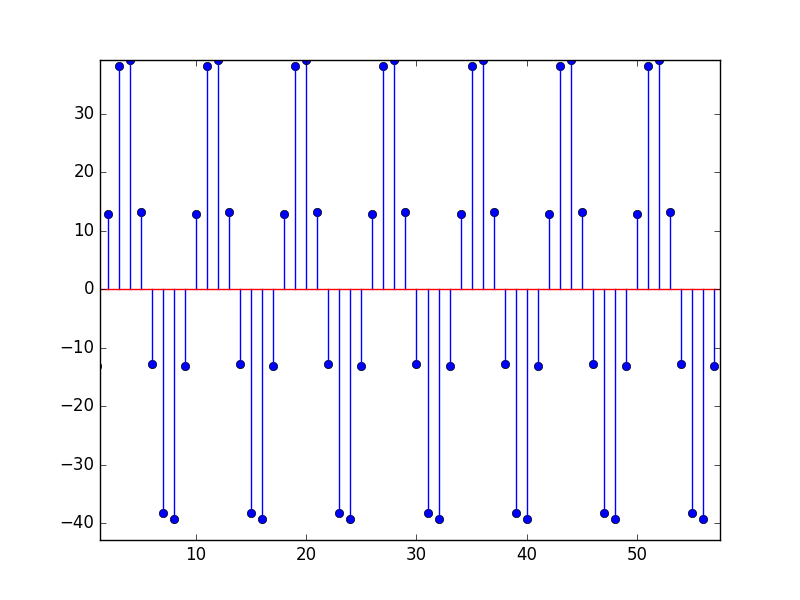}
    \caption{Reconstruction of periodic component 7}
\end{figure}

This is the reconstructed $1st$ periodic component $p_1 = 8$. From the noisy signal of noise margin, SNR of 7.39 dB.
\begin{figure}[H]
	\centering
	\includegraphics[scale = 0.3]{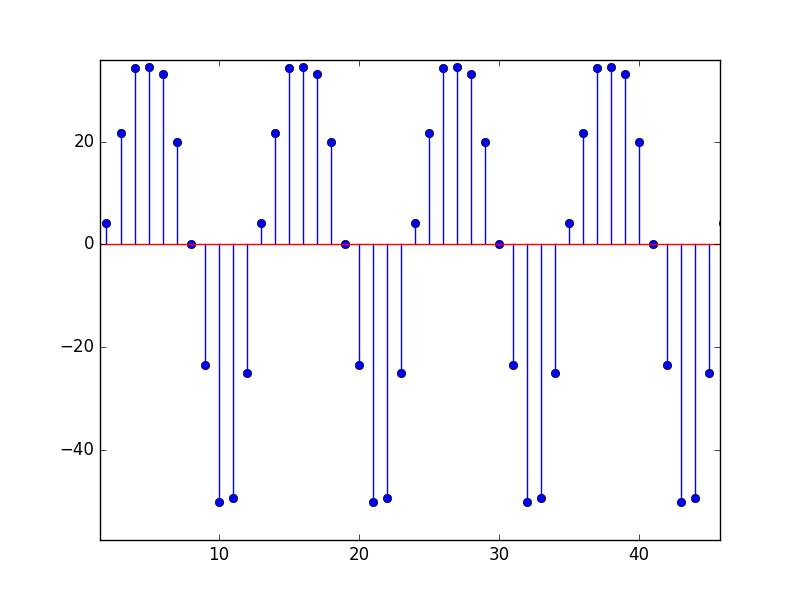}
	\caption{Reconstruction of periodic component 11}
\end{figure}

This is the $2nd$ reconstructed periodic component $p_2 = 11$. From the noisy signal of noise margin, SNR of 7.39 dB.
\begin{figure}[H]
	\centering
	\includegraphics[scale = 0.3]{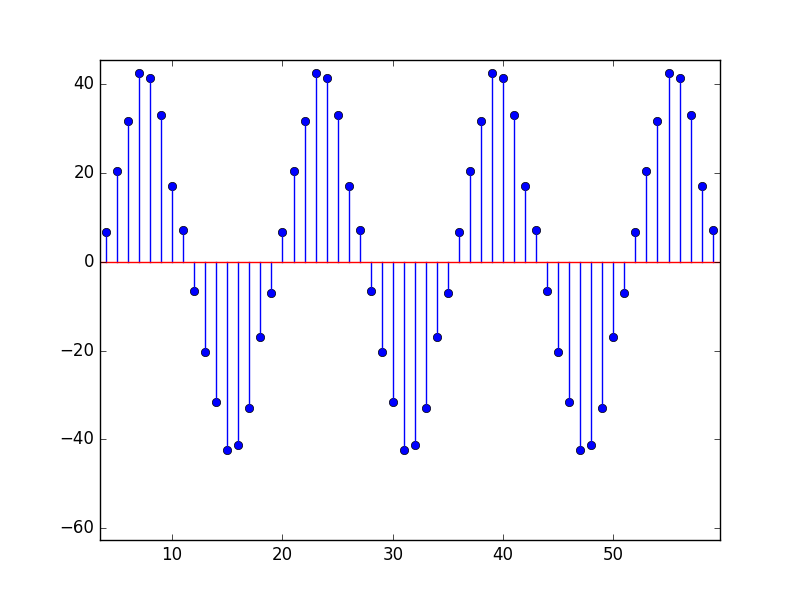}
    \caption{Reconstruction of periodic component 16}
\end{figure}

This is the $3rd$ reconstructed periodic component $p_3 = 16$. From the noisy signal of noise margin, SNR of 7.39 dB. \\
The 'resends' value in algorithm depends on the noise distribution.
\section{Conclusion}
We proposed a method to find the composite period length of an one dimensional signal in $O(n)$ order. The suggested method  is working very well till the SNR of 5dB. The biggest advantage of using this randomized process, is less computational time. If we have provided with the data length in the order of millions, SVD process will get stuck while running in the present available computational resources as the time complexity is in the order of $O(m^2n)$ where $m<n$ for a data matrix of dimension $m\times n$. That plot is shown in the result section. Once we get the composite period length, we are truncating the signal at the multiple of the composite period length to find the hidden components. Once we get the truncated signal, we are projecting it in the factor ramanujan subspaces of the composite period. Hence, exact hidden components are being able to be found. If the data length is very high, these steps are found to work extremely good. From these information, the shape or form of the individual hidden components can be approximated even in the presence of noise. \\
Another contribution of this study is to identify the DC component strength of the individual hidden periods such that the auto-correlation between the reconstructed and the actual hidden component is maximum.
\bibliographystyle{IEEEtran}
\bibliography{ref}
\end{document}